
\documentclass[letterpaper,10pt,conference]{ieeeconf}
%
% If IEEEtran.cls has not been installed into the LaTeX system files,
% manually specify the path to it like:
%\documentclass[journal,10pt]{IEEEtran}
\IEEEoverridecommandlockouts      
\author{ Shuang~Gao and  Peter E. Caines%, George Cantwell, Christian Borgs, and Peter E. Caines %\\ \vspace{0.2cm}  \today \\ Christian Borgs and Peter E. Caines (To be Submitted 2023)
%\author{Shuang~Gao,~\IEEEmembership{Member,~IEEE}, and Peter E. Caines,~\IEEEmembership{Life Fellow,~IEEE}%\quad (Version: \today)% <-this % stops a space
\thanks{*This work is supported in part by NSERC (Canada) Grant RGPIN-2024-06612 (SG), Simons-Berkeley Research Fellowship (SG)  and NSERC (Canada) Grant RGPIN- 2019-05336 (PEC).}% <-this % stops a space
\thanks{Shuang Gao is with the Department of Electrical  Engineering, Polytechnique Montreal \& GERAD,
   Montreal, QC, Canada. 
      Email:   {\tt\small shuang.gao@polymtl.ca}.
      Peter E. Caines is with the Department of Electrical and Computer Engineering, McGill University \& GERAD, Montreal, QC, Canada. Email:   {\tt\small peterc@cim.mcgill.ca}.
      }% 
\thanks{The authors would like to thank  Christian Borgs, George Cantwell, Jorge Velasco-Hern\'andez, Xin Guo, Minyi Huang, Aditya Mahajan, Charlotte Govignon, and reviewers  for their helpful feedback.%, and Yuqi Gao for the peaceful presence and inspiration during the preparation of this work.  
}  
%\\ \today
}%
\pdfminorversion=5

\usepackage{enumitem}
\setlist[description]{leftmargin=2.5em, labelsep=0.5em, style=standard}

% Some very useful LaTeX packages include:
% (uncomment the ones you want to load)

% *** MISC UTILITY PACKAGES ***
%
%\usepackage{ifpdf}
% Heiko Oberdiek's ifpdf.sty is very useful if you need conditional
% compilation based on whether the output is pdf or dvi.
% usage:
% \ifpdf
%   % pdf code
% \else
%   % dvi code
% \BI
% The latest version of ifpdf.sty can be obtained from:
% http://www.ctan.org/pkg/ifpdf
% Also, note that IEEEtran.cls V1.7 and later provides a builtin
% \ifCLASSINFOpdf conditional that works the same way.
% When switching from latex to pdflatex and vice-versa, the compiler may
% have to be run twice to clear warning/error messages.

% *** CITATION PACKAGES ***
%
\usepackage[noadjust]{cite}
% cite.sty was written by Donald Arseneau
% V1.6 and later of IEEEtran pre-defines the format of the cite.sty package
%~\cite{} output to follow that of the IEEE. Loading the cite package will
% result in citation numbers being automatically sorted and properly
% "compressed/ranged". e.g., [1], [9], [2], [7], [5], [6] without using
% cite.sty will become [1], [2], [5]--[7], [9] using cite.sty. cite.sty's
%~\cite will automatically add leading space, if needed. Use cite.sty's
% noadjust option (cite.sty V3.8 and later) if you want to turn this off
% such as if a citation ever needs to be enclosed in parenthesis.
% cite.sty is already installed on most LaTeX systems. Be sure and use
% version 5.0 (2009-03-20) and later if using hyperref.sty.
% The latest version can be obtained at:
% http://www.ctan.org/pkg/cite
% The documentation is contained in the cite.sty file itself.

% *** GRAPHICS RELATED PACKAGES ***
%
\usepackage{graphics}
\usepackage{epstopdf}

% *** MATH PACKAGES ***
%
\usepackage{amsmath}
  \usepackage[font=footnotesize]{subfig}

\usepackage{hyperref}
\usepackage{enumerate}
\usepackage{color}

% Include or Exclude the Comments 
% \usepackage{comment}
% \includecomment{comment}
%\excludecomment{comment}
% \specialcomment{notes}{\begingroup\sffamily\color{black}}{\endgroup}
%\includecomment{notes}

% Operators 

%%  Math Bold Face Short Command

   % Function A
   % Function B
   % Function D
   % Function H
	   % Function u
	   % Function x
	   % Function v

	   % Function I
	   % Function W

	   % Function V
			% Identity operator I 
			 
	   % Function f
	   % Function f

	   % BF N
	   % BF t
	   % BF T
	   % BF s
	   % BF s
	   % BF s

	   % BF T
	   % BF s

% step functions
	 %  step funciton

%\Sx_t

	% Finite Unitary Opetor Algebra
   % Finite Unitary Opetor Algebra

% Operators 
	   
   % A operator
   % A operator

  % Space of Real Line
\newcommand{\BN}{\mathds{N}}

	 % Auxillary State
  % Auxillary Control
  % Auxillary Cost 
  % Auxillary Riccati Solution
  % Auxillary Cost function

% Approximated P

% Small to large: state and control

   % Fourier Transformation
   % Fourier Transformation

% Graphon Space Notations

 % Exact Space functions without taking cut distance zero.

%
\newcommand*\TRANS{{\mathpalette\doTRANS\empty}}
\makeatletter
\newcommand*\doTRANS[2]{\raisebox{\depth}{$\m@th#1\intercal$}}
\makeatother

\usepackage{url}
\usepackage{amssymb,mathtools}

\usepackage[standard,thmmarks,amsmath]{ntheorem}

\usepackage[T1]{fontenc}
\usepackage{xcolor}
\usepackage{mathtools, amssymb, amsfonts, dsfont, nccmath}
\usepackage{microtype}

\usepackage{times}
\usepackage{dsfont}
\usepackage{tikz}
%\let\proof\relax
%\let\endproof\relax
%\usepackage{amsthm}

%\theoremstyle{remark}
%newtheorem{theorem}{Theorem}
%\newtheorem{proof}{Proof}
% \proof is already defined in IEEE conference style 
 %\renewcommand{\QED}{\QEDopen} 	% change \QED to open style 
% \newtheorem{lemma}{Lemma}
% \newtheorem{proposition}{Proposition}
% \newtheorem{corollary}{Corollary}
% \newtheorem{definition}{Definition}
%
% *** Do not adjust lengths that control margins, column widths, etc. ***
% *** Do not use packages that alter fonts (such as pslatex).         ***
% There should be no need to do such things with IEEEtran.cls V1.6 and later.
% (Unless specifically asked to do so by the journal or conference you plan
% to submit to, of course. )

% correct bad hyphenation here
\hyphenation{op-tical net-works semi-conduc-tor}
\begin{document}
%
% paper title
% Titles are generally capitalized except for words such as a, an, and, as,
% at, but, by, for, in, nor, of, on, or, the, to and up, which are usually
% not capitalized unless they are the first or last word of the title.
% Linebreaks \\ can be used within to get better formatting as desired.
% Do not put math or special symbols in the title.
%\title{Approximations in the Epidemic Spread Dynamics Characterized by  Transmission Neural Networks}
\title{Transmission Neural Networks: \\
Approximate Receding Horizon Control for Virus Spread on Networks}
%
% author names and IEEE memberships
% note positions of commas and nonbreaking spaces ( ~ ) LaTeX will not break
% a structure at a ~ so this keeps an author's name from being broken across
% two lines.
% use \thanks{} to gain access to the first footnote area
% a separate \thanks must be used for each paragraph as LaTeX2e's \thanks
% was not built to handle multiple paragraphs
%
%
\maketitle 
\begin{abstract}                % Abstract of 50--100 words
Transmission Neural Networks (TransNNs) proposed by Gao and Caines (2022) serve as both virus spread models  over networks and neural network models with tuneable activation functions. 
This paper establishes that TransNNs provide upper bounds on the infection probability generated from the associated   Markovian  stochastic Susceptible-Infected-Susceptible (SIS) model with $2^n$ state configurations where $n$ is the number of nodes in the  network, and can be employed as an approximate model for the latter. %
Based on such an approximation, 
 a TransNN-based receding horizon control approach for mitigating virus spread is proposed and we demonstrate that it allows significant computational savings compared to the dynamic programming solution to Markovian SIS model with $2^n$ state configurations, as well as providing less conservative control actions compared to the TransNN-based optimal control.  Finally,  numerical comparisons among (a) dynamic programming solutions for the Markovian SIS model, (b) TransNN-based optimal control  and  (c) the proposed TransNN-based receding horizon control  are presented.
\end{abstract}

%===============================================================================

\section{Introduction}

Epidemic models are important in  predicting and mitigating epidemic spreads, and  many different epidemic models have been proposed and analyzed (see~\cite{pastor2015epidemic,nowzari2016analysis, kiss2017mathematics, van2014performance,liggett2013stochastic,pare2020modeling}).
For epidemic spread models over networks, 
a first thorough system-theoretic analysis was presented in~\cite{lajmanovich1976deterministic}, where the network characterizes transmission probability rates among different population groups.  
 Virus spread processes on random directed graphs were analyzed in~\cite{kephart1992directed}, and virus spread models on  networks characterized by degree distributions were studied  in~\cite{pastor2001epidemic}.  % 
Mean field approximations  for continuous-time SIS virus spread models on networks have been developed in~\cite{van2008virus,cator2012second,van2015accuracy,van2014performance}, where mean field states approximate the fractions of the infected in nodal populations.  Continuous-time contact processes can  also model epidemic spreads (see e.g.~\cite{liggett2013stochastic}). %

Discrete-time SIS models over networks, closely related to the current paper, have been investigated by various researchers.  Discrete-time SIS models  over networks with homogeneous transmission probabilities have been proposed for identifying threshold values for epidemics spread in~\cite{wang2003epidemic} and~\cite{chakrabarti2008epidemic}. 
The nonlinear discrete-time SIS model in~\cite{chakrabarti2008epidemic} provides an upper bound on the probability of infection generated from the discrete-time Markov chain model with $2^n$ state configurations with $n$ as the number of nodes, and such a result was established for the case with homogeneous infection probabilities across all links in~\cite{ahn2014mixing}. In~\cite{ahn2013global}, the model from~\cite{chakrabarti2008epidemic} was linearized to  provide upper bounds for infection states, and the stability was then analyzed for the linearized model.  
% . 
%
%
%
The work~\cite{han2015data} extended the discrete-time SIS model from~\cite{wang2003epidemic, chakrabarti2008epidemic} to the case with non-homogeneous transmission probabilities, obtained a linear dynamic model that provides an upper bound for the probability of infection, and  then used the linear model to solve vaccine allocation problems via geometric programming. In the model proposed in~\cite{paarporn2015epidemic}, each agent  makes a social interaction decision based  on a local awareness  that depends on states of other agents, and an  SIS model approximating the stochastic SIS model over networks with $2^n$ state configurations was analyzed, which is a different approximation model from the current paper. An observer model was proposed for the discrete-time stochastic SIS model with  $2^n$ state configurations for designing feedback control  in~\cite{watkins2017inference}.
A comparative analysis of two discrete-time SIS epidemic  models (where one is based on the Euler discretization of the continuous time SIS model in~\cite{van2008virus} and the other is a variant of the discrete time SIS model in~\cite{chakrabarti2008epidemic,han2015data,ahn2014mixing}) was carried out in~\cite{pare2018analysis}. % 
 Both models converge to the continuous time SIS model in~\cite{van2008virus} with infinitesimal sampling time (see~\cite{pare2018analysis,ShuangPeterTransNN22}). The connection between discrete-time SIS models with non-homogeneous transmission probabilities and neural networks was established in~\cite{ShuangPeterTransNN22} via TransNNs. In~\cite{ShuangPeterTransNNControl25}, it was shown that TransNNs enable (approximate) optimal control for virus spreads with significant computational reductions compared to dynamic programming solutions for the Markovian SIS epidemic models.  Compared to earlier discrete-time virus spread models (e.g.~\cite{wang2003epidemic, chakrabarti2008epidemic,ahn2013global,ahn2014mixing,pare2018analysis}), TransNNs offer more concise representations of SIS dynamics over networks. In addition, TransNNs  simplify the stability analysis for SIS models over networks where the  transmission probabilities and connections are potentially heterogenous and time-varying~\cite{ShuangPeterTransNN22}.  All the papers discussed above differ from the current work either in terms of  the models or the results.% 

 As learning models,  TransNNs are universal function approximators and  contain trainable  activation functions~\cite{ShuangPeterTransNN22}.    
The key conceptual difference between TransNNs and standard neural networks is that trainable activation functions in TransNNs  are associated with links, whereas in standard neural networks,  
activation functions are typically considered as a part of nodes (see~\cite{ShuangTransNNVideo22} for detailed explanations). {Such a feature with trainable link functions also appeared in~\cite{liu2024kan}.} % 

{Main contributions of the current paper are as follows. Firstly, the relation between the Markovian stochastic SIS  model with $2^n$ state configurations (over possibly heterogeneous and time-varying transmission probabilities) and the  TransNN model is established; more specifically, we prove that TransNNs provide upper bounds for the probabilities of infection generated from the stochastic SIS epidemic model  under mild technical assumptions. 
Secondly, we demonstrated that TransNNs enable receding horizon control formulations to mitigate virus spread over  (time-varying and random) networks and allow significant computational reduction compared to the control solutions based on stochastic SIS epidemic models with $2^n$ state configurations. 
}

\section{Virus Spread Dynamics Over Networks}
Consider a physical (directed) contact network denoted by $(V,E^k)$ with an adjacency matrix $A_k=[a_{ij}^k]$, where $V$ is the node set and $E^k\subset V\times V$ is the edge set at time $k$. Let $(i,j) \in E^k$ denote the direct edge from node $j$ to node $i$ at time $k$.
 Each node of the physical contact network  may represent an individual person and a link between two persons exists, for instance, if they are within a given distance for an extended period of time (e.g. within 2 meters for at least 15mins). 
For all $i,j\in V$, let $w_{ij}^k$ denote the probability of node $j$ infecting its neighbouring node $i$ on the physical contact network given that $j$ is infected at time $k$, and  let $W_{ij}^k \in\{0,1\}$ be the binary random variable representing the successful transmission of virus from node $j$ to node $i$ at  time $k$. For simplicity, let $V =[n]\triangleq \{1,2,..., n\}$.

The state of a node $i \in [n]$ at time $k$ is denoted by a random variable $X_i(k)$, which takes binary values $0$ and $1$ with $0$ representing the healthy state and $1$ representing the infected state.  The one-step update of the  binary random states for epidemic spread over networks given the current states $(X_i(k))_{i\in [n]}$ follows the dynamics
\begin{equation}\label{eq:agent-based}
	1- X_i(k+1) = \prod_{j\in {N}_i^{\circ k}} \big(1- W_{ij}^kX_j(k)\big), \quad \forall i \in [n]
\end{equation}
{where $N_i^{\circ k}  \triangleq  \{j: (i,j)\in E^k \} \cup \{i\}$ denotes the (incoming) neighbourhood set of node $i$ with itself included at time $k$, and $(V, E^k)$ denotes the physical contact graph at time $k$. 
 {The dynamics have the property that for a healthy node to become infected, the infection needs to come from at least one of its neighbours, and furthermore the successful transmission of the virus from  one neighbour of node $i$ is sufficient to infect node $i$.} 
 It is worth highlighting that the inclusion of  self-loops (with the neighborhood represented by $N_i^{\circ k} $ for node $i$ at time $k$)  is essential in the characterization of the recovery process  in the virus spread dynamics, as the recovery process is equivalently represented by the self-transmissions with self-loops.
 
Let $X(k) \triangleq [X_1(k)\dots X_n(k)]^\TRANS \in \{0,1\}^n $ denote the state configuration at time $k$ and let $W^k\triangleq [W_{ij}^k] \in \{0,1\}^{n\times n}$. 
We introduce the following assumptions. %.
\begin{description}
\item[\bf(A1)] At any time $k> 0$,	 $W^k\triangleq [W_{ij}^k]$ is independent of $\{W^t: 0\leq t < k\}$ and $\{X{(t)}: 0\leq t < k \}$. 
\item[\bf (A2)]  At any time $k\geq 0$, the  binary random variables $\{W_{ij}^k:i,j \in [n]\}$ representing the transmissions are  conditionally (jointly) independent given the current state configuration $X(k)$. 
\end{description}

The assumption (A1) ensures that transmissions are independent of the history of states and past transmissions, and (A2) ensures that transmissions are conditionally independent given the current infection states.

Under (A1), the dynamics \eqref{eq:agent-based} are Markovian, since 
\begin{equation}
\begin{aligned}
	 \mathbb{E}(1-& X_i(k+1)| (X(t))_{t\in [k]}) \\
	& = \mathbb{E}\bigg(\prod_{j\in {N}_i^{\circ k}} \big(1- W_{ij}^kX_j(k)\big)\Big| (X(t))_{t\in [k]} \bigg)\\
	%& \qquad  \text{(by Assumption (A1))} \\
		& = \mathbb{E}\bigg(\prod_{j\in {N}_i^{\circ k}} \big(1- W_{ij}^kX_j(k)\big)\Big| X(k)\bigg)\\
	& =  \mathbb{E}(1- X_i(k+1)| X(k)).
\end{aligned}
\end{equation} 
Furthermore, under both  (A1) and (A2), 
we have  
\begin{equation*}%\label{eq:cond-exp}
\begin{aligned}
\mathbb{E}(1- X_i(k+1)|X(k)) & = \mathbb{E}\prod_{j\in {N}_i^{\circ k} } \Big(1- W_{ij}^kX_j(k)|X(k)\Big) \\
& = \prod_{j\in {N}_i^{\circ k} }\mathbb{E}  \Big(1- W_{ij}^kX_j(k)|X(k)\Big)
\end{aligned}
\end{equation*} 
which corresponds to the virus spread model (with one-step update) proposed in~\cite{ShuangPeterTransNN22}, since  the conditional probability satisfies
 \begin{equation}%
1-	\bar{\rho}_i(k+1) = \prod_{j\in {N}_i^{\circ k}} \Big(1- w_{ij}^k\bar{\rho}_j(k)\Big), \quad i\in [n],
\end{equation}
where  
$
w_{ij}^k\triangleq \textup{Pr}(W_{ij}^k=1 {| X_j(k)=1})
$ and  $$\bar{\rho}_i(k)\triangleq \mathbb{E}(X_i(k)|X(k-1)) = \text{Pr}(X_i(k)=1|X(k-1)).$$ 
For a state configuration $q\in\{0,1\}^n$, the probability of reaching $q$ is given by
\begin{equation*} 
\begin{aligned}
	\text{Pr}(&X(k+1)=q|X(k))   = \prod_{i=1}^n \text{Pr}(X_i(k+1)=q_i|X(k)) 
\end{aligned}
\end{equation*}
where the equality is due to the conditional independence of the virus transmissions $\{ W_{ij}^k\}$ assumed in (A2). %	

\begin{proposition}[Conditional Probability of Infection~\cite{ShuangPeterTransNNControl25}] \label{prop:cond-prob-infection}
Assume (A1) and (A2) hold. 
Given the state configuration at time $k$ denoted by ${X(k)= x} \in \{0,1\}^n$, the transition probability to a state configuration $q\in\{0,1\}^n$ is  given by 
\begin{equation}\label{eq:MarkovTranProb2}
\begin{aligned}
	\textup{Pr}(X(&k+1)=q|X(k)=x)\\ & = \prod_{i=1}^n \textup{Pr}(X_i(k+1)=q_i|X(k)= x)\\
	& = \prod_{i=1}^n \Big(q_i\rho_i(k+1) + (1-q_i) (1-\rho_i(k+1)) \Big)
\end{aligned}
\end{equation}
where
 \begin{equation}\label{eq:prob-configuration}
\rho_i(k+1) = 1-	\prod_{j\in {N}_i^{\circ k}} \Big(1- w_{ij}^k x_j\Big), \quad i\in [n].
\end{equation}
\end{proposition}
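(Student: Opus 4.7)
The plan is to establish the two equalities in the stated transition-probability identity in turn: first the factorization of the joint conditional probability into a product of per-node marginals, which relies on (A2), and then the closed-form expression for each marginal, obtained directly from the one-step update rule governing $X_i(k+1)$.

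For the factorization, I would observe that conditional on $X(k)=x$, the update rule makes $X_i(k+1)$ a deterministic measurable function of the entries $\{W_{ij}^k : j \in N_i^{\circ k}\}$, namely $X_i(k+1) = 1 - \prod_{j \in N_i^{\circ k}}(1-W_{ij}^k x_j)$. Since (A2) guarantees that all entries $\{W_{ij}^k : i,j \in [n]\}$ are jointly conditionally independent given $X(k)$, the sub-collections indexed by distinct receiving nodes $i$ are in particular conditionally independent, and hence the random variables $\{X_i(k+1)\}_{i\in [n]}$ are conditionally independent given $X(k)=x$. This yields the first equality.

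For the per-node marginal, I would first compute $\textup{Pr}(X_i(k+1)=0 \mid X(k)=x)$ by observing via the update rule that this event coincides with $\{W_{ij}^k x_j = 0 \text{ for every } j \in N_i^{\circ k}\}$. Applying (A2) factors the probability across $j$, yielding $\prod_{j \in N_i^{\circ k}} \textup{Pr}(W_{ij}^k x_j = 0 \mid X(k)=x)$. For each $j$, if $x_j=0$ the factor is deterministically $1$, while if $x_j=1$ it equals $1-w_{ij}^k$; both cases are captured uniformly by $1 - w_{ij}^k x_j$. Multiplying over $j$ produces $1 - \rho_i(k+1)$, and taking the complement gives $\textup{Pr}(X_i(k+1)=1 \mid X(k)=x) = \rho_i(k+1)$. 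The single expression $q_i\rho_i(k+1) + (1-q_i)(1-\rho_i(k+1))$ then interpolates between the two values $q_i \in \{0,1\}$, delivering the second equality.

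The argument is essentially a consolidation of the conditional-expectation identities already derived in the paragraphs preceding the statement, so no significant obstacle is expected. The one subtlety worth flagging is the identification $\textup{Pr}(W_{ij}^k = 1 \mid X(k)=x) = w_{ij}^k$ whenever $x_j=1$, which must be compatible with the marginal definition $w_{ij}^k \triangleq \textup{Pr}(W_{ij}^k = 1 \mid X_j(k)=1)$; this reflects the implicit structural assumption that the transmission $W_{ij}^k$ reacts only to the sender's state, consistent with the virus-spread interpretation of the model.
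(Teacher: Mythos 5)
Your proposal is correct and follows essentially the same route as the paper's own derivation: the factorization over nodes via the conditional independence of the transmissions in (A2) (each $X_i(k+1)$ being a function of the disjoint sub-collection $\{W_{ij}^k : j \in N_i^{\circ k}\}$), followed by the per-node marginal computed from the update rule, with each factor $1-w_{ij}^k x_j$ obtained exactly as in the conditional-expectation identities preceding the statement. Your closing remark about $\operatorname{Pr}(W_{ij}^k=1 \mid X(k)=x) = w_{ij}^k$ when $x_j=1$ correctly identifies the one implicit modeling assumption the paper also relies on.
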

This explicit representation of the transition probability to the next state configuration given the current state configuration will be used later in the Markov Decision Process (MDP) formulation for controlling virus spread over networks (see Section \ref{sec:control-virus-spread}). 

\section{Approximation by TransNNs}
To derive TransNNs that approximate the virus spread dynamics \eqref{eq:agent-based}, we  introduce the following  assumptions.  %
\begin{description}
	\item[\bf (A3)]  At time $k\geq 0 $, $\{ W_{ij}^k: i,j \in [n]\}$ are  independent and  for each $i, j \in [n]$, $W_{ij}^k$ is independent of  $\{ X_\ell(k): \ell\in [n],  \ell\neq j\}$. 
%\end{itemize}
	\item[\bf (A4)] The  events $\{\{X_i(k)=1\}:i\in [n], 0\leq k\leq T \}$ are  independent, with $T$ as the terminal time.
\end{description}
The assumption (A3)  introduces the independence of transmissions over different links at the current time $k$ and the independence of transmissions with respect to the states except the current state at the sending node.  Hence the assumption (A3) is stronger than the assumption (A2).  

Assumptions {(A1)}, (A3) and (A4) together  allow us to write the update of the expected state by taking the (total) expectation inside the product on the right-hand side  as follows: for $i\in [n]$, 
\begin{equation}\label{eq:naive-mf-approx}
	\begin{aligned}
\mathbb{E}(1- X_i(k+1)) & = \mathbb{E}\prod_{j\in {N}_i^{\circ k} } \Big(1- W_{ij}^kX_j(k)\Big) \\
& = \prod_{j\in {N}_i^{\circ k} }\mathbb{E}  \Big(1- W_{ij}^kX_j(k)\Big) .
\end{aligned}
\end{equation}
Under  {(A1)}, (A3) and (A4), the joint probability distribution for the random variables $\{Z_{ij}(k )\triangleq 1- W_{ij}^kX_j(k), {j\in {N}_i^{\circ k} } \}$ on the right-hand side is equal to the product of its marginal distributions.  
If (A3) and (A4) are not satisfied, the right-hand side of \eqref{eq:naive-mf-approx} constitutes an approximation of the joint distribution by the product of the marginals,  referred to as the naive mean-field approximation~\cite{barber2012bayesian} or individual-based mean-field approximation~\cite{van2008virus}. %  

Let $w_{ij}^k\triangleq \operatorname{Pr}(W_{ij}^k=1 {| X_j(k)=1})$ denote the conditional probability of the successful transmission of the virus from node $j$ to node~$i$ at time step $k$. % 
Then
\[
\mathbb{E}  \big(1- W_{ij}^kX_j(k)\big) =  \big(1-   w_{ij}^k \mathbb{E}  X_j(k)\big), \quad \forall i, j \in [n]. 
\] 
Hence under (A1),  (A3) and (A4), the dynamics in \eqref{eq:naive-mf-approx} lead to the  virus spread model in~\cite{ShuangPeterTransNN22}:
\begin{equation}\label{eq:TransNNs-virus}
1-	p_i(k+1) = \prod_{j\in {N}_i^{\circ k}} \big(1- w_{ij}^kp_j(k)\big), \quad i\in [n],
\end{equation}
with  $p_i(k)\triangleq \textup{Pr}(X_i(k)=1) = \mathbb{E}X_i(k) $ representing the (total) probability of node $i$ being infected at time $k$.
Furthermore, by the nonlinear state transformation     
\[ 
	s_i(k)\triangleq -\log(1-p_i(k)) ~ \in [0,\infty],
\]  
proposed in~\cite{ShuangPeterTransNN22},
the equivalent TransNN representation of \eqref{eq:TransNNs-virus} is then given by  
\begin{equation}\label{eq:TransNN}
	s_i(k+1)=\sum_{j\in {N}_i^{\circ k}} \Psi(w_{ij}^k, s_j(k)), \quad i \in [n]
\end{equation}
with the TlogSigmoid activation function~\cite{ShuangPeterTransNN22} 
\begin{equation}\label{eq:TlogSigmoid}
\Psi(w,x)= -\log(1-w+we^{-x})	
\end{equation}
 with the activation level $w\in[0,1]$ and the input signal $x\in[0,+\infty]$. Interested readers are referred to~\cite{ShuangPeterTransNN22,ShuangTransNNVideo22} for detailed properties of the TransNN model and its connections with standard neural network models.

 Thus, in  the stochastic model with $2^n$ state configurations described in \eqref{eq:agent-based}, imposing the independence assumptions (A3) and (A4) together with (A1)  leads to the virus spread model in  \eqref{eq:TransNNs-virus} and the equivalent TransNN model  \eqref{eq:TransNN}. 

\begin{remark}[Discussions on Assumption (A4)]
{Depending on the actual paths of infection over the networks, the assumption (A4) may not hold.}
There are two ways to get around the assumption on independence in (A4). One way to break the potential dependence among  $\{X_i(k)=1,i\in [n]\}$ is to  observe the states at each time step (i.e. setting the terminal time in (A4) to $T=1$ and re-initializing the dynamics at every time step with the new observations). The disadvantage of this approach is that we need to observe or simulate the states at each time step.  A similar idea is  used in the context of Restricted Boltzmann Machines (RBMs)~\cite{salakhutdinov2007restricted} if we interpret the time steps in the current paper as layers of bipartite graphs in RBMs. 
Alternatively,  we can sample points from the joint probability distribution to approximate and simulate the evolutions of the empirical distributions. The disadvantage is that  we must track the evolution of the empirical joint distribution over a state space with $2^n$ possible state configurations at each time.

Both approaches may incur significant computational burden, especially when the underlying network is large and complex.
Instead, the model in \eqref{eq:TransNNs-virus} can predict approximately the infection probabilities with lower complexity, {using only the infection status (or probabilities) at the initial time}. In fact, such an approximate model enables the control with low complexity, which will be presented later in Section~\ref{sec:control-virus-spread}.  	 
\end{remark}	 

\section{Infection Probability Upper Bound by TransNNs}
In this section,  we investigate further the relation between the state evolution in \eqref{eq:agent-based} and that  in \eqref{eq:TransNNs-virus}; more specifically, we show that without the assumption (A4) the states of the  dynamics in \eqref{eq:TransNNs-virus} (and equivalently those of TransNNs in \eqref{eq:TransNN}) provide upper bounds for the infection probabilities from the stochastic virus spread model in \eqref{eq:agent-based}. 
\begin{proposition}%[Upper Bound for Infection Probability]
\label{prop:prob-upper-bound}
Assume (A1) and (A3) hold. 
Let $X_i(k),$ $i \in[n]$ be the binary state in dynamics \eqref{eq:agent-based} and let $p_i(k)$ denote the probability state in  dynamics \eqref{eq:TransNNs-virus}. Given the same physical contact network $(V, E^k)$ for all  $k \in \mathds{N}$, the same conditional transmission probabilities $\{w_{ij}^k, i,j \in [n], k\geq 0\}$ and the same initial infection probability (i.e. $p_i(0)=\operatorname{Pr}(X_i(0)=1)$ for all $i\in [n]$), the following inequality holds
\begin{equation}\label{eq:prop2}
	p_i(k) \geq \operatorname{Pr}(X_i(k)=1), \quad \forall i \in [n], ~~\forall k \in \BN;
\end{equation}
if, furthermore, \textup{(A4)} is satisfied for the dynamics \eqref{eq:agent-based},  then the inequality in \eqref{eq:prop2} becomes an equality  for all $k\in\{0,1,\dots, T\}$ and  all $i \in [n]$.
\end{proposition}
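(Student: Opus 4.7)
The plan is to proceed by induction on $k$, carrying not only the stated probability inequality but also a positive-association property for the family $\{X_j(k)\}_{j\in[n]}$ as part of the inductive hypothesis. The base case $k=0$ follows directly from the matching of initial marginals assumed in the statement; the association clause is secured by the natural (and implicit) reading that the initial infection indicators $\{X_i(0)\}_{i\in[n]}$ are mutually independent, which in particular makes them positively associated.

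For the inductive step at time $k+1$, I would first invoke Proposition \ref{prop:cond-prob-infection} (its hypotheses (A1) and (A2) are met since (A3) strengthens (A2)) to write, almost surely,
\[
\operatorname{Pr}(X_i(k+1)=0\mid X(k)) \;=\; \prod_{j\in N_i^{\circ k}}\bigl(1 - w_{ij}^k X_j(k)\bigr).
\]
Taking total expectation yields $\operatorname{Pr}(X_i(k+1)=0)=\mathbb{E}\bigl[\prod_{j}(1-w_{ij}^k X_j(k))\bigr]$. Each factor is a non-increasing function of a single $X_j(k)$, hence the product is a coordinate-wise non-increasing function of the random vector $(X_j(k))_{j}$. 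Combining this monotonicity with the inductive positive-association hypothesis via the FKG/Harris correlation inequality gives
\[
\mathbb{E}\Bigl[\prod_j (1-w_{ij}^k X_j(k))\Bigr] \;\geq\; \prod_j\bigl(1-w_{ij}^k \operatorname{Pr}(X_j(k)=1)\bigr).
\]
The inductive bound $\operatorname{Pr}(X_j(k)=1)\leq p_j(k)$ and the monotonicity of the product in each coordinate then give $\prod_j(1-w_{ij}^k\operatorname{Pr}(X_j(k)=1)) \geq \prod_j(1-w_{ij}^k p_j(k))= 1 - p_i(k+1)$, so that $\operatorname{Pr}(X_i(k+1)=1)\leq p_i(k+1)$. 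To keep the induction closed I would observe that $X_i(k+1) = 1 - \prod_j (1 - W_{ij}^k X_j(k))$ is non-decreasing in both $(X_j(k))_j$ and $(W_{ij}^k)_{ij}$; by (A1) and (A3) the $W_{ij}^k$'s are independent of the past and of each other, so the joint family at time $k$ remains positively associated, and since non-decreasing maps preserve positive association, $\{X_i(k+1)\}_i$ inherit the property.

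For the equality claim under (A4), the mutual independence of $\{X_j(k)=1\}_{j,k}$ turns the correlation step into an identity $\mathbb{E}\prod=\prod\mathbb{E}$, and propagating this together with the inductive equality $p_j(k)=\operatorname{Pr}(X_j(k)=1)$ (true at $k=0$) collapses the inequality chain to equality for every $k\in\{0,1,\ldots,T\}$. The main technical obstacle is the correlation step: it is not available from (A1) and (A3) alone and forces positive association to be threaded through the induction, which is justified only because the one-step map $X(k)\mapsto X(k+1)$ is coordinate-wise monotone in both the past state and the fresh transmission noise.
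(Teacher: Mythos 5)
Your argument follows essentially the same route as the paper's proof: both reduce the claim to the one-step correlation inequality $\mathbb{E}\prod_{j\in N_i^{\circ k}}(1-W_{ij}^k X_j(k)) \geq \prod_{j\in N_i^{\circ k}} \mathbb{E}(1-W_{ij}^k X_j(k))$ and then close the induction using the coordinate-wise monotonicity of $(y_1,\dots,y_n)\mapsto 1-\prod_j(1-w_{ij}^k y_j)$, with the equality case under (A4) handled identically. The difference is in how the correlation step is justified. The paper asserts the conditional-probability monotonicity $\operatorname{Pr}(Z_{ij}(k)=1 \mid (Z_{i\ell}(k)=1)_{\ell\in S}) \geq \operatorname{Pr}(Z_{ij}(k)=1)$ with an intuitive explanation and a citation, and carries no distributional property of $X(k)$ through the induction; you instead name the precise tool (FKG/Harris applied to a product of coordinate-wise non-increasing factors) and correctly observe that it requires positive association of $(X_j(k))_j$, which does not follow from (A1) and (A3) alone and must be threaded through the inductive hypothesis. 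That is a more honest accounting of what the step needs, and what it buys is a rigorous justification of the one assertion the paper leaves informal. Two points to tighten. First, your base case needs $\{X_i(0)\}_i$ to be positively associated (e.g.\ mutually independent or deterministic); the proposition only fixes the initial marginals, so this should be stated as an additional mild hypothesis rather than read off implicitly. Second, in the propagation step you treat $(X(k),W^k)$ as jointly associated because the $W_{ij}^k$ are mutually independent and independent of the past, but (A3) only makes $W_{ij}^k$ independent of $\{X_\ell(k)\}_{\ell\neq j}$ and only its conditional law given $X_j(k)=1$ is specified, so $W_{ij}^k$ need not be independent of $X_j(k)$. This is repairable: since $W_{ij}^k$ enters the dynamics only through $W_{ij}^k X_j(k)$, you may redefine it on $\{X_j(k)=0\}$ as an independent Bernoulli$(w_{ij}^k)$ variable without changing $X(k+1)$, after which the union of the associated family $X(k)$ with an independent family is associated and your monotone-map argument closes the induction; but as written the sentence skips this.
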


\begin{proof}
Define the binary random variable $Z_{ij}(k) \triangleq 1-W_{ij}^kX_j(k)$.  Then  $Z_{ij}(k)=1$ represents the event that node $i$ does not successfully receive virus transmission from node~$j$  at time $k$.
We observe that for any~$i \in [n]$,
 $$
 \begin{aligned}
 	 \mathbb{E}&(1- X_i(k+1))  \\
 	 & = \mathbb{E}\prod_{j\in {N}_i^{\circ k}} \Big(1- W_{ij}^kX_j(k)\Big)  \triangleq \mathbb{E}\prod_{j\in {N}_i^{\circ k}} Z_{ij}(k) \\
&  = \text{Pr} \big( (Z_{ij}=1)_{j\in {N}_i^{\circ k}}\big) %\\[3pt]
\geq \prod_{j\in {N}_i^{\circ k}}\text{Pr}(Z_{ij}(k)=1) \\
%& = 
 \end{aligned}
$$
since the following holds
$$
\text{Pr} ( Z_{ij}(k)=1| (Z_{i\ell}(k)=1)_{\ell \in S\subset [n], \ell \neq j} ) \geq  \text{Pr} ( Z_{ij}(k)=1),
$$
for all subset $S\in [n]$ and all $i, j \in [n]$.
That is, conditioning on  node $i$ not successfully receiving  virus transmissions from  neighbours other than node $j$, does not reduce the probability that node $i$ does not successfully receive   virus transmission from node $j$\footnote{In fact, this is the case even when $X_i(k)$ for all $i\in[n]$ are not independent. A similar argument was used in~\cite{van2008virus}. 
Such an argument holds for Susceptible-Infected-Susceptible (SIS) virus spread models and it may not hold in general for Susceptible-Infected-Recovered (SIR)  models.}.   
Applying the property above, we obtain that 
 \begin{equation}
 	 \begin{aligned}
	 \mathbb{E}(1- X_i(k+1))  &= \mathbb{E}\prod_{j\in {N}_i^{\circ k}} \Big(1- W_{ij}^kX_j(k)\Big)  \\
&  \geq \prod_{j\in {N}_i^{\circ k}} \mathbb{E}\Big(1- W_{ij}^kX_j(k)\Big) \\
&  =  \prod_{j\in {N}_i^{\circ k}}\Big(1-  w_{ij} \mathbb{E} X_j(k)\Big)
 \end{aligned}
 \end{equation}
 where $w_{ij}\triangleq \text{Pr}(W_{ij}=1| X_j=1)$ denotes the conditional probability of the successful virus transmission  from node $j$ to node $i$ given that node $j$ is infected. 
Thus we obtain 
 \begin{equation}\label{eq:update-inequality}
 	 \mathbb{E}X_i(k+1) \leq 1- \prod_{j\in {N}_i^{\circ k}} \Big(1- w_{_{ij}} \mathbb{E}X_j(k)\Big),\quad \forall i \in [n].
 \end{equation}
Firstly, we note that this inequality is element-wise (i.e. it holds for each $i\in [n]$). Starting from the initial condition $p_i(0)= \mathbb{E}X_i(0)$ for all $i \in [n]$, we obtain $p_i(1)\geq \mathbb{E}X_i(1)$ for all $i \in [n]$. 
Secondly, we observe that the function 
\begin{equation} \label{eq:rhs-monotone}
f(y_1,..., y_n) \triangleq 1- \prod_{j\in {N}_i^{\circ k}} \big(1- w_{_{ij}} y_j\big)	
\end{equation}
 corresponding to the right-hand side of \eqref{eq:update-inequality} is monotonically increasing with respect to $y_j$ for all $j\in {N}_i^{\circ k}$. 
Thus, 
\[
\begin{aligned}
p_i(2) & = 1- \prod_{j\in {N}_i^{\circ k}} \Big(1- w_{_{ij}} p_j(1)\Big) \\
& \geq 
1- \prod_{j\in {N}_i^{\circ k}} \Big(1- w_{_{ij}} \mathbb{E}X_j(1)\Big)
\geq 
\mathbb{E}X_i(2),\quad \forall i \in [n].
\end{aligned}
\]
Thus by induction we obtain $p_i(k)\geq \mathbb{E}X_i(k)$ for all $i \in [n]$, for all $k\in \mathbb{N}$.
Finally, since the state $X_i\in \{0,1\}$ is binary, we obtain $\mathbb{E}X_i(k) = \operatorname{Pr}(X_i(k)=1)$ and hence   \eqref{eq:prop2}. 
\end{proof}

A similar inequality holds for Shannon information content (of being healthy). 
Consider the  state transformation (see~\cite{ShuangPeterTransNN22}):
$$
\mathbb{T}(p)= -\log(1-p) \in [0,\infty],\quad  \forall p\in[0,1].
$$
The state transformation above is bijective (from $[0,1]$ to $[0, \infty]$) and monotonically increasing with respect to the input. Taking $\mathbb{T}(\mathbb{E}X_i)$ as the state of node $i$, then the dynamics in \eqref{eq:update-inequality} leads to the following upper bound dynamics
\[
\mathbb{T}(\mathbb{E}X_i(k+1))\leq \sum_{j\in N_i^{\circ k}}  \Psi(w_{ij}, \mathbb{T}(\mathbb{E}X_i(k)))
\]
in terms of the evolution of the Shannon information content (of being healthy).
Then by Proposition~\ref{prop:prob-upper-bound}, we obtain that
\begin{equation}
	s_i(k)\geq \mathbb{T}(\mathbb{E}X_i)\triangleq -\log(1-\mathbb{E}X_i),
\end{equation}
that is, the states following the TransNN dynamics in \eqref{eq:TransNN} provide an upper bound for the information content of being healthy for node $i\in [n]$. Then taking the inverse mapping of $\mathbb{T}$ on both sides, we obtain the following result.
\begin{proposition}%[Upper Bound for Infection Probability]
\label{prop:prob-upper-bound-TransNNs}%~\\
Assume (A1) and (A3) hold. 
Let $X_i(k),$ $i \in[n]$ be the binary state in dynamics \eqref{eq:agent-based} and let $s_i(k)$ denote the  information content state in  dynamics \eqref{eq:TransNN}. 
Let $s_i(0) = -\log(1-\operatorname{Pr}(X_i(0)=1)) \in [0, +\infty]$ for all $i\in [n]$. 
 Given the same physical contact network $(V, E^k)$ for all  $k \in \mathds{N}$, the same conditional transmission probabilities $\{w_{ij}^k, i,j \in [n], k\geq 0\}$, the following inequality holds
\begin{equation}\label{eq:prop3}
	\operatorname{Pr}(X_i(k)=1) \leq 1-\exp(-s_i(k)), \quad \forall i \in [n], ~\forall k \in \BN;
\end{equation}
if, furthermore, \textup{(A4)} is satisfied for the dynamics \eqref{eq:agent-based},  then the inequality  in \eqref{eq:prop3} becomes an equality for all $k\in\{0,1,\dots, T\}$ and all $i \in [n]$. 
\end{proposition}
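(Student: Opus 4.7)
The plan is to reduce Proposition~\ref{prop:prob-upper-bound-TransNNs} to Proposition~\ref{prop:prob-upper-bound} via the bijective, monotonically increasing state transformation $\mathbb{T}(p) = -\log(1-p)$ from $[0,1]$ to $[0,\infty]$, whose inverse is $\mathbb{T}^{-1}(s) = 1 - \exp(-s)$. The key observation is that the TransNN dynamics \eqref{eq:TransNN} is nothing other than the dynamics \eqref{eq:TransNNs-virus} written in the transformed coordinate, so an inequality between $p_i(k)$ and $\operatorname{Pr}(X_i(k)=1)$ transfers one-to-one into an inequality between $1-\exp(-s_i(k))$ and $\operatorname{Pr}(X_i(k)=1)$.

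First, I would align initial conditions: defining $p_i(0) \triangleq \operatorname{Pr}(X_i(0)=1)$, the hypothesis $s_i(0) = -\log(1-\operatorname{Pr}(X_i(0)=1))$ gives exactly $s_i(0) = \mathbb{T}(p_i(0))$, which is the initialization required by Proposition~\ref{prop:prob-upper-bound}.

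Second, I would show by induction that the TransNN state in \eqref{eq:TransNN} and the probability state in \eqref{eq:TransNNs-virus} are conjugate under $\mathbb{T}$ at every time step, i.e.\ $s_i(k) = \mathbb{T}(p_i(k))$ for all $k \in \mathbb{N}$ and $i \in [n]$. The inductive step follows from the definition of the TlogSigmoid activation \eqref{eq:TlogSigmoid}: when $s_j(k) = \mathbb{T}(p_j(k))$, we have $\Psi(w_{ij}^k, s_j(k)) = -\log(1 - w_{ij}^k + w_{ij}^k e^{-s_j(k)}) = -\log(1 - w_{ij}^k p_j(k))$, and summing over $j \in N_i^{\circ k}$ in \eqref{eq:TransNN} then exponentiating recovers exactly \eqref{eq:TransNNs-virus}. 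Hence $1 - \exp(-s_i(k)) = p_i(k)$ at every time step.

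Third, I would invoke Proposition~\ref{prop:prob-upper-bound} under the shared hypotheses (A1), (A3), identical transmission probabilities, and matched initial condition to obtain $p_i(k) \geq \operatorname{Pr}(X_i(k)=1)$. Substituting $p_i(k) = 1 - \exp(-s_i(k))$ yields \eqref{eq:prop3}. The equality assertion under (A4) is obtained by applying the same substitution to the equality clause of Proposition~\ref{prop:prob-upper-bound}. I do not anticipate a genuine obstacle here, since both ingredients---the monotonicity/FKG-style inequality used to establish Proposition~\ref{prop:prob-upper-bound} and the algebraic identity relating $\Psi$ to the product form---are already in hand; the only mild care required is to interpret the boundary case $p_i = 1 \leftrightarrow s_i = +\infty$ consistently within the extended nonnegative reals $[0,\infty]$, which poses no difficulty because $\mathbb{T}$ is defined as a map into $[0,\infty]$ and the arithmetic is compatible with limits.
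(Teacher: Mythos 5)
Your proposal is correct and follows essentially the same route as the paper: both reduce the statement to Proposition~\ref{prop:prob-upper-bound} by exploiting that \eqref{eq:TransNN} is the image of \eqref{eq:TransNNs-virus} under the monotone bijection $\mathbb{T}(p)=-\log(1-p)$, so that $1-\exp(-s_i(k))=p_i(k)$ and the inequality (and the equality under (A4)) transfers directly. Your explicit induction verifying the conjugacy $s_i(k)=\mathbb{T}(p_i(k))$ via the identity $\Psi(w_{ij}^k,\mathbb{T}(p))=-\log(1-w_{ij}^k p)$ is a slightly more detailed rendering of what the paper asserts when it introduces the TransNN representation, and the boundary case $p_i=1\leftrightarrow s_i=+\infty$ is handled the same way in both.
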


\begin{remark}
	{Under (A1) and (A3), the Markovian dynamics~\eqref{eq:agent-based} have an absorbing state configuration where all nodes are healthy. It implies that when the time runs sufficiently long, all nodes will eventually become healthy. However, when the number of nodes $n$ is large, the time required for the infection to end can be extremely long, and in such cases the expected mixing time can be analyzed instead, which depends on the network connectivity~\cite{ahn2014mixing}.}
\end{remark}

Since the TransNN model \eqref{eq:TransNN} and the equivalent virus spread model \eqref{eq:TransNNs-virus} are less conservative than the associated linear approximation of the SIS epidemic model (see the proof of~\cite[Thm. 1]{ShuangPeterTransNN22}),  simpler but more conservative upper bounds for the probability of infection can be obtained below.% in the following proposition. 
\begin{proposition}[\cite{ShuangPeterTransNNControl25}]
	Let $X_i(k)$ be the binary state for node~$i$ at time~$k$ in the dynamics \eqref{eq:agent-based}. Let $A_k \triangleq  [a_{ij}^k]$ denote the adjacency matrix of the physical contact network and  $\Omega_k \triangleq  [w_{ij}^k]$ denote the matrix of the conditional probabilities of transmissions, at time $k$. Assume  \textup{(A1)} and \textup{(A3)} hold. Then for all $i \in [n]
$, the following holds
	\[
	 \operatorname{Pr}(X_i(k)=1) \leq [(A_k\odot \Omega_k)  \dots (A_0\odot \Omega_0) \mu(0)]_i 
	\]
	where the initial condition is  $\mu(0) \triangleq [\mu_1(0),\dots, \mu_n(0)]$, $\mu_i(0) \triangleq \operatorname{Pr}(X_i(0)=1)$, and $\odot$ is the Hadamard product. 
\end{proposition}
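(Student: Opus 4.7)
The plan is to combine Proposition~\ref{prop:prob-upper-bound} with an elementary linearization of the TransNN dynamics \eqref{eq:TransNNs-virus}. By Proposition~\ref{prop:prob-upper-bound}, under (A1) and (A3) we already have $\operatorname{Pr}(X_i(k)=1) \le p_i(k)$, where $p_i(k)$ evolves according to \eqref{eq:TransNNs-virus} with initialization $p(0)=\mu(0)$. Hence it suffices to bound $p(k)$ component-wise by the stated linear product.

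The key tool is the elementary inequality $1 - \prod_{j=1}^m (1 - x_j) \le \sum_{j=1}^m x_j$ for $x_j\in[0,1]$, provable by a one-line induction on $m$. A short induction also shows $p_j(k)\in[0,1]$ for all $j,k$, so applying this inequality with $x_j = w_{ij}^k p_j(k)$ to the right-hand side of \eqref{eq:TransNNs-virus} gives the one-step linear bound
\[
p_i(k+1) \le \sum_{j\in N_i^{\circ k}} w_{ij}^k p_j(k) = \sum_{j=1}^n a_{ij}^k w_{ij}^k p_j(k) = [(A_k\odot\Omega_k)\,p(k)]_i,
\]
where the sum over $N_i^{\circ k}$ is rewritten as a sum over all $j$ by absorbing the self-loop into $a_{ii}^k=1$, consistent with the convention implicit in the definition of $N_i^{\circ k}$.

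The remaining step is a straightforward iteration: because $A_k\odot\Omega_k$ has non-negative entries, left-multiplication preserves component-wise inequalities, so the one-step bound can be unrolled from $k-1$ down to $0$ to yield the claimed product bound, and chaining with Proposition~\ref{prop:prob-upper-bound} completes the argument. I do not expect any real obstacle here; the entire proof is a monotonicity calculation whose only non-routine ingredient is the linearization inequality $1-\prod(1-x_j)\le\sum x_j$, and the main bookkeeping issue is simply keeping self-loops consistently absorbed into $A_k$ so that the vector-matrix formulation matches the sums over $N_i^{\circ k}$.
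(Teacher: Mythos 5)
Your argument is correct and is essentially the route the paper itself intends: it invokes Proposition~\ref{prop:prob-upper-bound} and then observes (citing the proof of Theorem~1 of the TransNN paper) that the nonlinear update \eqref{eq:TransNNs-virus} is dominated by its linearization, which is exactly your Weierstrass-type inequality $1-\prod_j(1-x_j)\le\sum_j x_j$ followed by monotone unrolling of the non-negative matrices. The only discrepancy is an indexing one: your derivation yields $p(k)\le[(A_{k-1}\odot\Omega_{k-1})\cdots(A_0\odot\Omega_0)\mu(0)]$, i.e.\ $k$ factors, whereas the proposition as printed leads the product with $A_k\odot\Omega_k$ ($k{+}1$ factors, matching time $k{+}1$); this appears to be an off-by-one in the statement rather than a gap in your proof, and your remark about absorbing self-loops into $a_{ii}^k=1$ is indeed necessary for the bound to hold at all.
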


\section{Controlling Virus Spread  based on TransNNs} \label{sec:control-virus-spread}
Let the control $u_i(k) \in\{0,1\} $  represent the individual-level interventions that reduce the susceptibility of node $i$ to infection at time $k$; more specifically, at time $k$, $u_i(k) =1$ if individual $i$ undertakes protective measures (e.g. vaccination,  sanitization or  the use of protective equipment)   and $u_i(k)=~0$ otherwise.  
Assume the control (i.e. the protective measure)  reduces $1- \beta$ of the infection probability with $\beta \in [0,1]$. For a control action $u(k) \triangleq (u_1(k)\dots u_n(k))^\TRANS$, the controlled transmission probabilities are then given by
\begin{equation} %\label{eq:controlled-probability}
m_{ij}^{k}(u_i(k)) \triangleq u_i(k) w_{ij}^k \beta + (1-u_i(k)) w_{ij}^k, ~ \forall i, j \in  [n].
\end{equation}

In the following, we first introduce MDP solutions for Markovian SIS dynamics, and then compare them  with control solutions based on TransNNs.
\subsection{MDP Solution for Markovian SIS Dynamics} \label{subsec:MDP-SIS}

Consider  the stochastic dynamics in \eqref{eq:agent-based}  with $w_{ij}^k$ replaced by the controlled transmission probability $m_{ij}^k(u_i(k))
$, and the cost below
\[
J_1 = \mathbb{E}\sum_{k=0}^{T-1} \mathbf{1}_n^\TRANS( c X(k) +   u(k))  \triangleq \mathbb{E}\sum_{k=0}^{T-1}  l (X(k), u(k))
\]
with transition probabilities given by \eqref{eq:MarkovTranProb2} and $\mathbf{1}_n$ as the $n$-dimensional vector of ones. In this case, the state space is $\{0,1\}^n$ and control action space is $\{0,1\}^n$. 
Let $V_k: \{0,1\}^n \to \mathbb{R}$ denote the value function defined by 
\[
V_k(x) \triangleq \min_{u_{k}\dots u_{T-1}} \mathbb{E}\sum_{\tau =k}^{T-1}  l (X(\tau), u(\tau)), \quad X(k) = x. 
\]
By Dynamic Programming (see e.g.~\cite{bertsekas2012dynamic}), the value function satisfies the Bellman equation
\[
V_k(x) = \min_{u} \Big[ l(x, u) + \sum_{x^\prime \in \{0,1\}^n}\text{Pr}(x^\prime|x, u ) V_{k+1}(x^\prime) \Big] 
\]
where $V_T =0$,  the transition probability following Proposition~\ref{prop:cond-prob-infection}  is specified by 
\begin{equation}
\begin{aligned}
	\text{Pr}(&X(k+1)=q|X(k)=x, u(k)=v)\\% & = \prod_{i=1}^n \text{Pr}(X_i(k+1)=q_i|X(k)= x(k), u(k)=u(k))\\
	& = \prod_{i=1}^n \big(q_i\rho_i(k+1) + (1-q_i) (1-\rho_i(k+1)) \big)
\end{aligned}
\end{equation}
with
$
\rho_i(k+1) = 1-	\prod_{j\in {N}_i^{\circ k}} \big(1- m_{ij}^{k}(v_i)x_j(k)\big)
$
and \begin{equation} \label{eq:controlled-probability}
m_{ij}^{k}(v_i) = v_i w_{ij}^k \beta + (1-v_i) w_{ij}^k, ~ \forall i, j \in  [n].
\end{equation}
The optimal control is then given by the arguments of the minima on the right-hand side of the Bellman equation. 

\subsection{Optimal Control based on TransNNs} \label{subsec:oc-TransNNs}
Consider the same cost (adapted from $J_1$ \footnote{An application of Proposition \ref{prop:prob-upper-bound} yields that under the same control actions, if assumptions (A1) and (A3) are satisfied,
$J_1\leq J_2$ holds, and if furthermore the assumption (A4) is satisfied, $J_1 = J_2$ holds.})
\[
J_2 = 
\sum_{k=0}^{T-1} \mathbf{1}_n^\TRANS (c p(k) +  u(k))
\]
where the dynamics are given by \eqref{eq:TransNNs-virus} with  $w_{ij}^k$ there replaced by $m_{ij}^k(u_i(k))
$.
Equivalently, we can use the TransNN model \eqref{eq:TransNN} as the dynamic model 
\begin{equation} \label{eq:mm_si}
	s_i(k+1)=\sum_{j\in {N}_i^{\circ k}} \Psi(m_{ij}^{k}(u_i(k)), s_j(k)), \quad i \in [n],
\end{equation}
with $
\Psi(w,x)= -\log(1-w+we^{-x})	
$
 and the cost is then equivalently given by 
\[
J_2 = \sum_{k=0}^{T-1} \mathbf{1}_n^\TRANS (c (\mathbf{1}_n - \text{exp}_{\circ}(-s(k))) +  u(k)),
\]  
where $\text{exp}_\circ(-s(k))\triangleq [\exp(-s_1(k)) \dots \exp(-s_n(k))]^\TRANS $ denotes the element-wise exponential function.  The state space is $[0,\infty]^n$ and the space of control actions  is $\{0,1\}^n$. Following~\cite{ShuangPeterTransNNControl25}, we proceed to solve the optimal control problem using the Minimum Principle~\cite{bertsekas2012dynamic}. Firstly, we introduce the relaxed optimal control problem with relaxed control actions in $[0,1]^n$ (i.e. $u_i(k) \in [0,1]$ for all $i \in [n]$ and for all $k\geq 0$).  
Then the Hamiltonian for the relaxed control problem is given by
\begin{equation}\label{eq:Hamiltonian}
\begin{aligned}
	H(k)&  = ~ \mathbf{1}_n^\TRANS \left( c (\mathbf{1}_n - \exp_{\circ}(-s(k))) + u(k) \right) \\
	& + \sum_{i=1}^{n} \lambda_i(k+1) \sum_{j\in {N}_i^{\circ k}} \Psi(m_{ij}^{k}(u_i(k)), s_j(k))
\end{aligned}
\end{equation}
and hence the adjoint dynamics are given by 
\begin{equation}\label{eq:mm-lambdai}
\begin{aligned}
	\lambda_i(k)  %&  = c  e^{-s_i(k)}  + \frac{\partial}{\partial s_i} \sum_{i=1}^n \lambda_i(k+1)  \sum_{j\in {N}_i^{\circ k}}  \Psi(m_{ij}^{k}(u_i(k)), s_j(k)) \\
	%& = c  e^{-s_i(k)}  + \frac{\partial}{\partial s_i} \sum_{\ell =1}^n \lambda_\ell(k+1)  \sum_{j\in {N}_\ell^{\circ k}}  \Psi(m_{\ell j}^{k}(u_\ell(k)), s_j(k)) \\
	& = c  e^{-s_i(k)}  \\
	& +  \sum_{\ell \in \mathcal{N}_{i-}^{\circ k} } \lambda_\ell(k+1)    \frac{\partial}{\partial s_i}   \Psi(m_{\ell i}^{k}(u_\ell(k)), s_i(k)) 
\end{aligned}
\end{equation}
with $ \lambda_i(T) = 0$, where  $\mathcal{N}_{i-}^{\circ k}: =\{\ell: (\ell, i)\in E^k\} \cup i$ denotes the outgoing neighborhood of $i$ with itself included, $m_{ij}^{k}(u_i(k))$ is given by \eqref{eq:controlled-probability}, and  
\[
\begin{aligned}
	\frac{\partial}{\partial s_i}  &\Psi( m_{\ell i}^{k}(u_\ell(k)), s_i(k))  =\\
	& \frac{m_{\ell i}^k (u_\ell(k)) e^{-s_i(k)}}{1 - m_{\ell i}^k (u_\ell (k))  + m_{\ell i}^k (u_\ell (k))  e^{-s_i(k)}} \in [0, m_{\ell i}^k (u_\ell(k))].
\end{aligned}
\]
The optimal control action can be checked node by node  without loss of generality since control actions of different nodes affect the Hamiltonian in \eqref{eq:Hamiltonian} in a decoupled way.
Moreover, it is easy to verify that the Hamiltonian $H(k)$ in \eqref{eq:Hamiltonian} is convex in $u_i(k)$ for all $i\in [n]$ and $k\geq 0$. Thus, the optimal relaxed  control action exists and  lies either on the boundary or in the interior.
Since the original control action  $u_i(k) \in \{0,1\}$ is binary, an (approximate) optimal control is determined by 
\begin{equation}
\Delta H_i(k) = H(k) |_{u_i(k)=1} - H(k) |_{u_i(k)=0}
\end{equation}
which is explicitly given by
\begin{equation*}
\Delta H_i(k) = 1 - \sum_{j\in {N}_i^{\circ k}} \lambda_j(k+1) \log \frac{1 - w_{ij}^k \beta + w_{ij}^k \beta e^{-s_j(k)}}{1 - w_{ij}^k + w_{ij}^k e^{-s_j(k)}}.
\end{equation*}
We note that (i) the $\log$ term in the equation above is negative and (ii) the adjoint process is non-negative based on its dynamics. Thus, $ \Delta H_i(k)$ could be either positive or negative, and the (approximate) optimal control rule that  minimizes the Hamiltonian can be  given by
\begin{equation} \label{eq:optimal-control}
u_i^*(k) =
\begin{cases}
1, &   \Delta H_i(k)  <  0  \\
0, & \text{otherwise}.
\end{cases}
\end{equation}

 To verify that the binary control actions generated using \eqref{eq:optimal-control} actually minimize the Hamiltonian, we compare the gradient evaluations $\frac{\partial H(k)}{\partial u_i(k)}|_{u_i(k)=0}$ and $\frac{\partial H(k)}{\partial u_i(k)}|_{u_i(k)=1}$ along the trajectories of states and adjoint states generated using the control \eqref{eq:optimal-control}: if both gradients share the same sign, the convexity of the Hamiltonian $H(k)$ in  $u_i(k)$ implies that the control action that minimizes the Hamiltonian lies on the boundary and hence given by \eqref{eq:optimal-control}.
 
 The optimal control generated above only provides candidates of optimal control solutions since the Minimum Principle only provides necessary conditions for optimality. To demonstrate that the solution is globally optimal, additional verification procedures are required. 

\subsection{Receding Horizon Control based on TransNNs} \label{subsec:rhc}

In the receding horizon control formulation, we  use  TransNNs  to predict the approximate  probability of infection in the future,  and update the current state and the current control each time  an observation of the current state $p(t) = p^0(t)$ (or $s(t) = s^0(t)$ with $s^0_i(t)  = -\log(1-p_i^0(t)) $)  becomes available at time $t$.
The corresponding receding horizon optimization problem (with shrinking horizons) is then formulated as follows: 
\[
\begin{aligned}
	J_3(p^0(t),t) %
	& = \sum_{k=t}^{T-1} \mathbf{1}_n^\TRANS (c p(k) +  u(k)),  
\end{aligned}
\]
subject to the dynamics given by \eqref{eq:TransNNs-virus} with $p(t) = p^0(t)$. At each time step,  the optimal control problem  over the horizon $\{t,t+1,\dots, T\}$ is solved (via the Minimum Principle) and only the resulting control action $u(t)$ at time $t$ is applied. The state for  node $i$ at time $t$ satisfies that $p_i(t) =1$ if infected and $p_i(t)=0$ otherwise. 

Equivalently, we can use the TransNN model \eqref{eq:TransNN} as the  model for dynamics
\begin{equation}
	s_i(k+1)=\sum_{j\in {N}_i^{\circ k}} \Psi(m_{ij}^{k}(u_i(k)), s_j(k)), \quad i \in [n],
\end{equation}
with $
\Psi(w,x)= -\log(1-w+we^{-x})	
$ and  $ s(t) = s^0(t)$ and $s^0_i(t)  = -\log(1-p_i^0(t)) $) for $i \in [n]$.
Then the corresponding receding horizon cost  is equivalently given by 
\[
J_3 (s^0(t), t) = \sum_{k=t}^{T-1} \mathbf{1}_n^\TRANS (c (\mathbf{1}_n - \text{exp}_{\circ}(-s(k))) +  u(k)),
\]  
where $\text{exp}_\circ(\cdot)$ denotes the element-wise exponential function.
The solution involves solving the following joint equations  over the horizon $\{t, t+1,\dots, T\}$ for all $i \in [n]$
\begin{equation}
	s_i(k+1)=\sum_{j\in {N}_i^{\circ k}} \Psi(m_{ij}^{k}(u_i^*(k)), s_j(k))\end{equation}
with $s_i(t) = s_i^0(t)$
and  
\begin{equation}\label{eq:oc-lambda_i}
\begin{aligned}
	\lambda_i(k)  
	& = c  e^{-s_i(k)}  \\
	& +  \sum_{\ell \in \mathcal{N}_{i-}^{\circ k} } \lambda_\ell(k+1)    \frac{\partial}{\partial s_i}   \Psi(m_{\ell i}^{k}(u_\ell(k)), s_i(k)) 
\end{aligned}
\end{equation}
with $ \lambda_i(T) = 0$, where  $\mathcal{N}_{i-}^{\circ k}: =\{\ell: (\ell, i)\in E^k\} \cup i$ denotes the outgoing neighborhood of $i$ with itself included. The optimal control rule used in solving the joint equations is  given by
\begin{equation}
u_i^*(k) =
\begin{cases}
1, &   \Delta H_i(k)  <  0  \\
0, & \text{otherwise}
\end{cases}
\end{equation}
with 
\begin{equation*}
\Delta H_i(k) = 1  \textcolor{black}{-\lambda_i(k+1)}\sum_{j\in {N}_i^{\circ k}}  \log \frac{1 - w_{ij}^k \beta + w_{ij}^k \beta e^{-s_j(k)}}{1 - w_{ij}^k + w_{ij}^k e^{-s_j(k)}}
\end{equation*}
for all $k\in \{t, t+1, \dots, T-1\}$. 
At time $t$, we only apply the first control action $u_i^*(t)$ for all $i \in [n]$.
 
The state for  node $i$ observed at time $t$ satisfies that $s_i(t) =+\infty$ if infected and $s_i(t)=0$ otherwise. 
We note that $s(t)$ may contain $+\infty$ at the (updated initial) time $t$, but the cost function and the states $s(k)$ for $k>t$ always remain bounded. 

This receding horizon control scheme allows us to update the observations at time $t$ instead of using the evolution of probabilities given by the TransNN model over the entire horizon $\{0,1, ..., T\}$ specified in Section \ref{subsec:oc-TransNNs}.
The form of solution is not exactly dynamic programming as in the MDP solution (in Section \ref{subsec:MDP-SIS}) as we use the TransNN dynamics instead of the  Markovian dynamics with $2^n$ state configurations, and it is not exactly following the Minimum Principle for optimal control over the horizon $\{0,1, ..., T\}$ (in Section \ref{subsec:oc-TransNNs}) as we keep updating the horizon $\{t, t+1,\dots,  T\}$ as we observe the new infection state at time $t$.

\subsection{Complexity}

Compared to the MDP solution, the advantage of the receding horizon control  is that there is no need to compute the value function over the $2^n$ state configurations.  
Compared to the optimal control based on TransNNs, the advantage of the receding horizon control is that the resulting control is less conservative as we update the state observation over time, and  its disadvantage is that $T$ number of optimal control problems instead of $1$ optimal control problem need to be solved, over the horizon $\{0,1,\dots,  T\}$.

For naive MDP solutions, at each dynamic programming step,  $\mathcal{O}( 2^{3n}  {n d})$  elementary operations are needed since the number of  state configurations and that of control actions are both $2^n$ and each evaluation of \eqref{eq:MarkovTranProb2} requires $\mathcal{O}({n d})$ elementary operations, where $\mathcal{O}$ is notation for the standard Big-O asymptotic upper bound and  $d$ denotes the maximum (incoming and outgoing) degree over the physical contact graph $([n], E^k)$ for all $0 \leq k \leq T$. Since there are in total  $T$ dynamic programming steps, the complexity of the MDP solution is $\mathcal{O}(2^{3n} T {n d} )$, and hence is exponential in $n$ and linear in $T$ (which is also demonstrated later by the numerical results  in Fig.~\ref{fig:compute_time}). 

For the optimal control based on TransNNs, the solution involves the computations of the  adjoint dynamics, the state dynamics and control actions, each of which requires $\mathcal{O}(n T d )$  elementary operations where $d$ denotes the maximum (incoming and outgoing) degree over the physical contact graph $([n], E^k)$ for all $0 \leq k \leq T$.  Let $n_{\text{iter}}$ denote the number of iteration needed  for solving the joint equations \eqref{eq:mm_si}, \eqref{eq:mm-lambdai} and \eqref{eq:optimal-control} (for instance, using fixed-point iterations). Then the complexity in terms of elementary operations is $\mathcal{O}(n T d n_{\text{iter}} )$. 

The complexity of the receding horizon control involves solving similar optimal control problems iteratively over the horizon $\{0,1,\dots,  T-1\}$. 
Hence it involves $\mathcal{O}(n T^2 d n_{\text{iter}})$ elementary operations.

\section{Numerical Results}
The parameters used in the numerical examples are: $\beta =0.3, ~T=10, ~c=200.$ For the clarity of illustrations, we select $n = 5$ (a much larger $n$ is possible for control based on TransNNs, but not feasible for MDP solutions). The network structure and transmission probabilities are illustrated in Fig.~\ref{fig:net-prob}. %\vspace{-0.3cm}
\begin{figure}[htb] 
\centering
	\includegraphics[width=7.5cm]{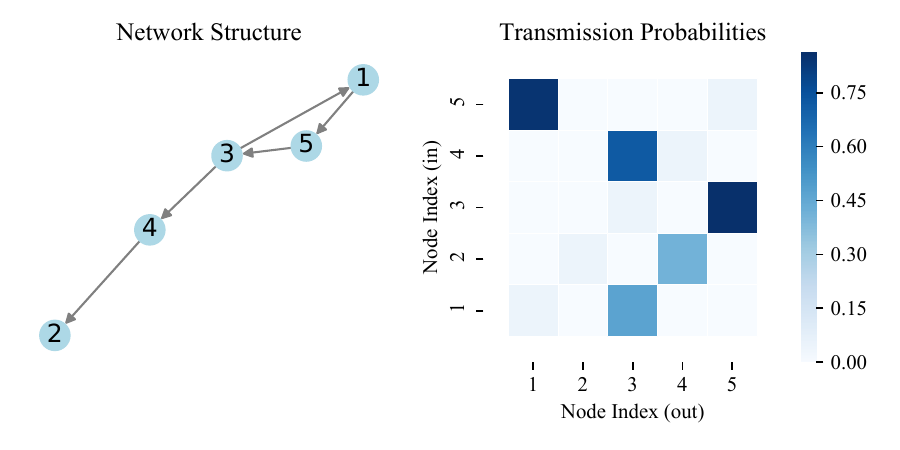}
	\caption{A network example with 5 nodes (left) and the transmission probabilities (right) among nodes.} \label{fig:net-prob}
\end{figure}
\begin{figure}[htb]
\centering
		\includegraphics[width=7.8cm]{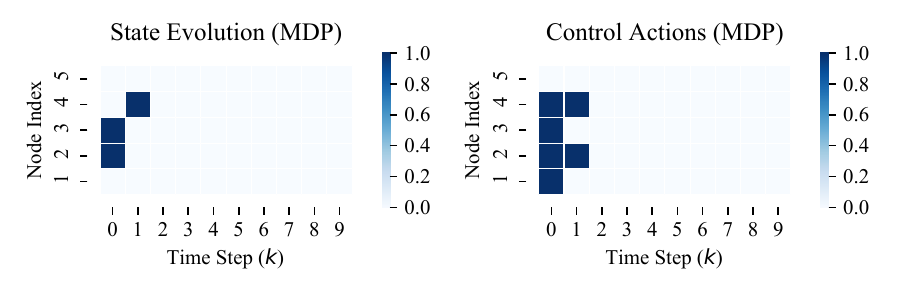}
		\caption{Control actions (right) generated from the MDP control, and actual state realizations (left) the under such control actions.}\label{fig:mdp-results}
\end{figure}
\begin{figure}[htb]
\centering
		\includegraphics[width=8cm]{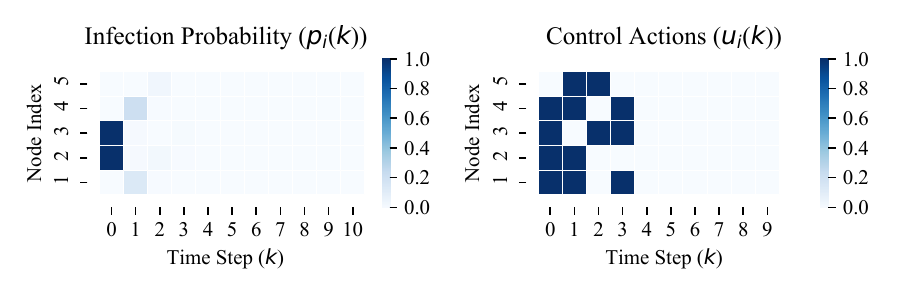}
		\caption{Control actions (right) generated from TransNN-based optimal control  and the infection probabilities (left) in the TransNN model under such control actions.}  % c=200
		\label{fig:trans-control-results}
\end{figure}
\begin{figure}[htb]
\centering
		\includegraphics[width=8cm]{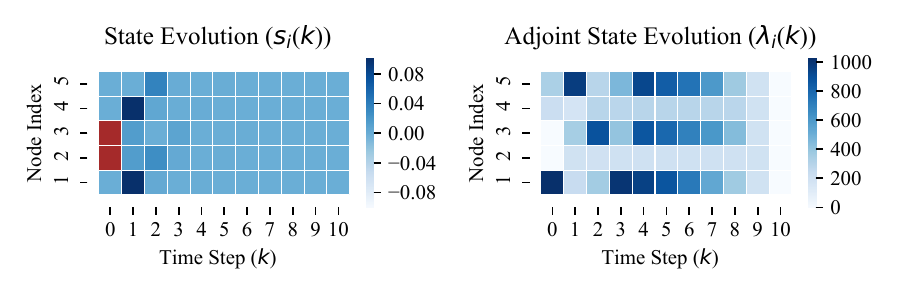}
		\caption{States of TransNNs (left)  under the TransNN-based optimal control actions and the adjoint states (right). Brown squares (left) represent $s_i(0)= \infty$ which corresponds to $p_i(0)=1$. Although $s_i(0)$ may be $+\infty$ for some $i \in [n]$,  the cost and the state $s_i(k)$ for all $k>0$ and all $i\in [n]$ always remain bounded. }
		\label{fig:info-adjoint-states}
\end{figure}
\begin{figure}[htb]
\centering
		\includegraphics[width=8cm]{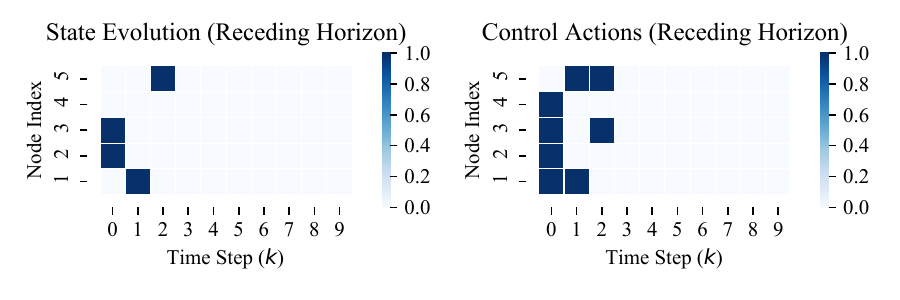}
		\caption{Control actions (right) generated from  TransNN-based receding horizon control (see Section~\ref{subsec:rhc})  and one actual state realization (left) the under such control law.}\label{fig:rhc-results}
\end{figure}

The state realizations and the control actions in one simulation under the optimal control from solving the MDP is given in Fig.~\ref{fig:mdp-results}. The execution time for solving the MDP problem is 16.72 seconds on a standard MacBookPro laptop.  
 The probability states of TransNNs and the optimal control actions generated from controlling TransNNs are shown in Fig.~\ref{fig:trans-control-results}, and the adjoint states and the (information content) states of TransNNs  are shown in Fig.~\ref{fig:info-adjoint-states}.  The execution time for solving the TransNN control problem is 0.020 seconds on the same laptop (which corresponds to a computational time reduction by about 3 orders of magnitude compared to solving MDP).  The state realizations and the corresponding control actions generated from  the receding horizon control based on TransNNs are illustrated in Fig.~\ref{fig:rhc-results}. The execution time for solving the receding horizon control is 0.094 seconds on a standard MacBookPro laptop (which corresponds to a computational time reduction by more than 2 orders of magnitude compared to solving MDP).

 From Fig.~\ref{fig:mdp-results} and Fig.~\ref{fig:trans-control-results}, we see that TransNN-based optimal control actions include all the control actions generated from MDP solutions,  and at the first time step the control actions under both solution methods are the same. From Fig.~\ref{fig:rhc-results} and Fig.~\ref{fig:trans-control-results}, we observe that the receding horizon control is less conservative in terms of control actions and all the control actions generated from receding horizon control are included in the TransNN-based optimal control actions. 

Both the optimal control and the receding horizon control are suitable for problems with a large number of nodes (e.g. for the problem with $n=100$,  the computation for receding horizon control based on TransNNs takes about 25.5 seconds and that for optimal control based on TransNNs about 4.4 seconds). In contrast, using the MDP solution is challenging since it is not feasible to directly store vectors of length $2^{100}$ for the value function on a standard MacBookPro laptop.
%\vspace{-0.3cm}
\begin{figure}[htb]
	\subfloat[The computation time  with respect to the problem horizon $T$.]{
		\includegraphics[width=4cm]{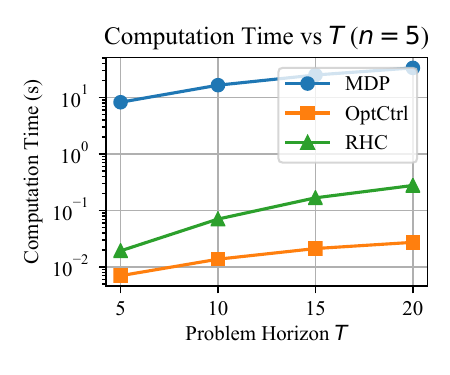}\label{fig:compute_time_vs_T}}
		\quad 
	\subfloat[The computation time with respect to the number of nodes $n$. ]{\includegraphics[width=4cm]{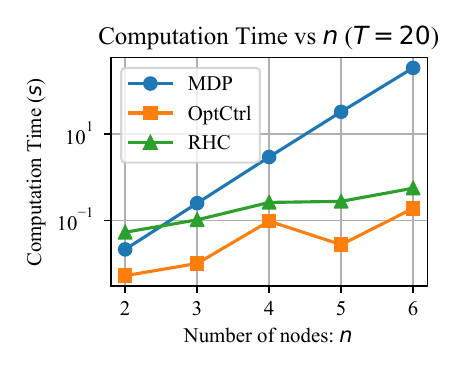}
			\label{fig:compute_time_vs_n}	}
			%\caption{The computation time for three control laws with respect to the problem horizon $T$.}}
\caption{The computation time for three control laws on a standard MacBookPro laptop: (i) MDP based on Markovian SIS dynamics, (ii) optimal control (OptCtrl) based on TransNNs  and (iii) receding horizon control (RHC) based on TransNNs. } \label{fig:compute_time}
% T  n    MDP_time   MM_time  RHC_time
% 5   6   90.632992  0.009223  0.025974
% 10  6  184.818311  0.018053  0.097826
% 15  6  278.230873  0.027509  0.218518
% 20  6  370.639201  0.036370  0.382729
% 25  6  448.658477  0.044481  0.563122

\end{figure}

For all three control laws (i.e. MDP control, TransNN-based optimal control, and TransNN-based receding horizon control), the computation time  with respect to the problem horizon and that with respect to the number of nodes are plotted respectively in Fig.~\ref{fig:compute_time_vs_T} and Fig.~\ref{fig:compute_time_vs_n}. The computation time for naive MDP solutions increases exponentially with respect to the number of nodes and is significantly larger than control solutions  based on TransNNs.
\section{Conclusion}

This work demonstrates that TransNNs enable  approximate receding horizon control solutions for Markovian SIS dynamics with $2^n$ state configurations. 
It allows significant computational savings compared to the dynamic programming solution to Markov decision model with $2^n$ state configurations, as well as providing less conservative control actions compared to the optimal control based on TransNNs. 

Future work will (a) evaluate the receding horizon control for TransNNs with different types of control actions, (b) identify  TransNN models with immune (or inhibition) states,  (c) relax the condition (A1) to investigate the case with non-Markovian dynamics, (d) consider the cases with risk-aware control cost, (e) explore network structures to simplify MDP solutions, (f) extend the approximate receding horizon control for general MDP problems, and (g) use TransNNs as a framework to explain the computational reductions in neural networks in treating high dimension problems.

\bibliographystyle{IEEEtran}
\bibliography{mybib}

% Generated by IEEEtran.bst, version: 1.14 (2015/08/26)
\begin{thebibliography}{10}
\providecommand{\url}[1]{#1}
\csname url@samestyle\endcsname
\providecommand{\newblock}{\relax}
\providecommand{\bibinfo}[2]{#2}
\providecommand{\BIBentrySTDinterwordspacing}{\spaceskip=0pt\relax}
\providecommand{\BIBentryALTinterwordstretchfactor}{4}
\providecommand{\BIBentryALTinterwordspacing}{\spaceskip=\fontdimen2\font plus
\BIBentryALTinterwordstretchfactor\fontdimen3\font minus
  \fontdimen4\font\relax}
\providecommand{\BIBforeignlanguage}[2]{{%
\expandafter\ifx\csname l@#1\endcsname\relax
\typeout{** WARNING: IEEEtran.bst: No hyphenation pattern has been}%
\typeout{** loaded for the language `#1'. Using the pattern for}%
\typeout{** the default language instead.}%
\else
\language=\csname l@#1\endcsname
\fi
#2}}
\providecommand{\BIBdecl}{\relax}
\BIBdecl

\bibitem{pastor2015epidemic}
R.~Pastor-Satorras, C.~Castellano, P.~Van~Mieghem, and A.~Vespignani,
  ``Epidemic processes in complex networks,'' \emph{Reviews of Modern Physics},
  vol.~87, no.~3, p. 925, 2015.

\bibitem{nowzari2016analysis}
C.~Nowzari, V.~M. Preciado, and G.~J. Pappas, ``Analysis and control of
  epidemics: A survey of spreading processes on complex networks,'' \emph{IEEE
  Control Systems Magazine}, vol.~36, no.~1, pp. 26--46, 2016.

\bibitem{kiss2017mathematics}
I.~Z. Kiss, J.~C. Miller, and P.~L. Simon, \emph{Mathematics of epidemics on
  networks}, ser. Interdisciplinary Applied Mathematics.\hskip 1em plus 0.5em
  minus 0.4em\relax Cham: Springer, 2017, vol.~46.

\bibitem{van2014performance}
P.~Van~Mieghem, \emph{Performance analysis of complex networks and
  systems}.\hskip 1em plus 0.5em minus 0.4em\relax Cambridge University Press,
  2014.

\bibitem{liggett2013stochastic}
T.~M. Liggett, \emph{Stochastic Interacting Systems: Contact, Voter and
  Exclusion Processes}.\hskip 1em plus 0.5em minus 0.4em\relax Springer Science
  \& Business Media, 2013, vol. 324.

\bibitem{pare2020modeling}
P.~E. Par{\'e}, C.~L. Beck, and T.~Ba{\c{s}}ar, ``Modeling, estimation, and
  analysis of epidemics over networks: An overview,'' \emph{Annual Reviews in
  Control}, vol.~50, pp. 345--360, 2020.

\bibitem{lajmanovich1976deterministic}
A.~Lajmanovich and J.~A. Yorke, ``A deterministic model for gonorrhea in a
  nonhomogeneous population,'' \emph{Mathematical Biosciences}, vol.~28, no.
  3-4, pp. 221--236, 1976.

\bibitem{kephart1992directed}
J.~O. Kephart and S.~R. White, ``Directed-graph epidemiological models of
  computer viruses,'' in \emph{Computation: the Micro and the Macro
  View}.\hskip 1em plus 0.5em minus 0.4em\relax World Scientific, 1992, pp.
  71--102.

\bibitem{pastor2001epidemic}
R.~Pastor-Satorras and A.~Vespignani, ``Epidemic spreading in scale-free
  networks,'' \emph{Physical Review Letters}, vol.~86, no.~14, p. 3200, 2001.

\bibitem{van2008virus}
P.~Van~Mieghem, J.~Omic, and R.~Kooij, ``Virus spread in networks,''
  \emph{IEEE/ACM Transactions On Networking}, vol.~17, no.~1, pp. 1--14, 2008.

\bibitem{cator2012second}
E.~Cator and P.~Van~Mieghem, ``Second-order mean-field
  susceptible-infected-susceptible epidemic threshold,'' \emph{Physical Review
  E}, vol.~85, no.~5, p. 056111, 2012.

\bibitem{van2015accuracy}
P.~Van~Mieghem and R.~van~de Bovenkamp, ``Accuracy criterion for the mean-field
  approximation in susceptible-infected-susceptible epidemics on networks,''
  \emph{Physical Review E}, vol.~91, no.~3, p. 032812, 2015.

\bibitem{wang2003epidemic}
Y.~Wang, D.~Chakrabarti, C.~Wang, and C.~Faloutsos, ``Epidemic spreading in
  real networks: An eigenvalue viewpoint,'' in \emph{Proceedings of the 22nd
  International Symposium on Reliable Distributed Systems}, 2003, pp. 25--34.

\bibitem{chakrabarti2008epidemic}
D.~Chakrabarti, Y.~Wang, C.~Wang, J.~Leskovec, and C.~Faloutsos, ``Epidemic
  thresholds in real networks,'' \emph{ACM Transactions on Information and
  System Security}, vol.~10, no.~4, pp. 1--26, 2008.

\bibitem{ahn2014mixing}
H.~J. Ahn and B.~Hassibi, ``On the mixing time of the {SIS Markov} chain model
  for epidemic spread,'' in \emph{Proceedings of the 53rd IEEE Conference on
  Decision and Control (CDC)}, 2014, pp. 6221--6227.

\bibitem{ahn2013global}
------, ``Global dynamics of epidemic spread over complex networks,'' in
  \emph{Proceedings of the 52nd IEEE Conference on Decision and Control (CDC)},
  2013, pp. 4579--4585.

\bibitem{han2015data}
S.~Han, V.~M. Preciado, C.~Nowzari, and G.~J. Pappas, ``Data-driven network
  resource allocation for controlling spreading processes,'' \emph{IEEE
  Transactions on Network Science and Engineering}, vol.~2, no.~4, pp.
  127--138, 2015.

\bibitem{paarporn2015epidemic}
K.~Paarporn, C.~Eksin, J.~S. Weitz, and J.~S. Shamma, ``Epidemic spread over
  networks with agent awareness and social distancing,'' in \emph{Proceedings
  of the 53rd IEEE Annual Allerton Conference on Communication, Control, and
  Computing}, 2015, pp. 51--57.

\bibitem{watkins2017inference}
N.~J. Watkins, C.~Nowzari, and G.~J. Pappas, ``Inference, prediction and
  control of networked epidemics,'' in \emph{Proceedings of the IEEE American
  Control Conference (ACC)}, 2017, pp. 5611--5616.

\bibitem{pare2018analysis}
P.~E. Par{\'e}, J.~Liu, C.~L. Beck, B.~E. Kirwan, and T.~Ba{\c{s}}ar,
  ``Analysis, estimation, and validation of discrete-time epidemic processes,''
  \emph{IEEE Transactions on Control Systems Technology}, vol.~28, no.~1, pp.
  79--93, 2018.

\bibitem{ShuangPeterTransNN22}
S.~Gao and P.~E. Caines, ``Transmission neural networks: From virus spread
  models to neural networks,'' \emph{arXiv preprint arXiv:2208.03616}, 2022.

\bibitem{ShuangPeterTransNNControl25}
------, ``Transmission neural networks: Approximation and optimal control,''
  \emph{IFAC-PapersOnLine}, vol.~59, no.~4, pp. 31--36, 2025, 10th IFAC
  Conference on Networked Systems (NecSys).

\bibitem{ShuangTransNNVideo22}
\BIBentryALTinterwordspacing
S.~Gao, ``Transmission neural networks: From virus spread models to neural
  networks,'' Video, Epidemics and Information Diffusion Workshop, Simons
  Institute for the Theory of Computing, October 2022. [Online]. Available:
  \url{https://www.youtube.com/watch?v=YenBLk_dbfw}
\BIBentrySTDinterwordspacing

\bibitem{liu2024kan}
Z.~Liu, Y.~Wang, S.~Vaidya, F.~Ruehle, J.~Halverson, M.~Solja{\v{c}}i{\'c},
  T.~Y. Hou, and M.~Tegmark, ``{KAN: Kolmogorov-Arnold} networks,'' \emph{arXiv
  preprint arXiv:2404.19756}, 2024.

\bibitem{barber2012bayesian}
D.~Barber, \emph{Bayesian Reasoning and Machine Learning}.\hskip 1em plus 0.5em
  minus 0.4em\relax Cambridge University Press, 2012.

\bibitem{salakhutdinov2007restricted}
R.~Salakhutdinov, A.~Mnih, and G.~Hinton, ``Restricted boltzmann machines for
  collaborative filtering,'' in \emph{Proceedings of the 24th International
  Conference on Machine Learning}, 2007, pp. 791--798.

\bibitem{bertsekas2012dynamic}
D.~P. Bertsekas, \emph{Dynamic Programming and Optimal Control, Vol. II:
  Approximate Dynamic Programming}, 4th~ed.\hskip 1em plus 0.5em minus
  0.4em\relax Belmont, MA: Athena Scientific, 2012.

\end{thebibliography}

\end{document}